\newtheorem{theorem}{Theorem}
\newtheorem{corollary}{Corollary}
\newtheorem{lemma}{Lemma}
\newenvironment{proof}[1][Proof]{\textbf{#1.} }{\ \rule{0.5em}{0.5em}}
\begin{document}

\title[]{Commutativity of missing label operators in terms of Berezin brackets}

\author{Luis J. Boya\dag}
\address{\dag\ Dpto. F\'{\i}sica Te\'orica\\Facultad de Ciencias\\Universidad de Zaragoza\\
E-50009 Zaragoza, Spain} \ead{luisjo@unizar.es}

\author{Rutwig Campoamor-Stursberg\ddag}
\address{\ddag\ Dpto. Geometr\'{\i}a y Topolog\'{\i}a\\Fac. CC. Matem\'aticas\\
Universidad Complutense de Madrid\\Plaza de Ciencias, 3\\E-28040
Madrid, Spain} \ead{rutwig@mat.ucm.es}

\begin{abstract}
We obtain a criterion on the commutativity of polynomials in the
enveloping algebra of a Lie algebra in terms of an involution
condition with respect to the Berezin bracket. As an application,
it is shown that the commutativity requirement of missing label
operators for reduction chains in the missing label problem can be
solved analytically.
\end{abstract}

\pacs{02.20Sv,\; 21.60Fw}

\maketitle

\section{Introduction}
A classical situation where group theory is applied to the
description of a physical system is concerned with classification
schemes. Here, irreducible representations of an (approximate)
symmetry Lie group must be decomposed into irreducible
representations of a certain subgroup in order to classify states:
\begin{equation}
\left|
\begin{array}
[c]{cccccccccc}%
 \frak{s} & \supset &  \frak{s}^{\prime} & \supset &  \frak{s}^{\prime\prime} & ...& \supset &  \frak{s}^{(n)} & ...\\
\downarrow &  & \downarrow &  & \downarrow & & & \downarrow & \\
 \left[  \lambda\right]   &  &  \left[
\lambda^{\prime}\right] & &  \left[ \lambda^{\prime\prime}\right]
& ...&\supset & \left[ \lambda^{(n)}\right] & ...
\end{array}
\right\rangle. \label{Red1}
\end{equation}
Depending on the situation, the labels obtained from the reduction
(\ref{Red1}) are sufficient to solve the problem if we require
multiplicity free reductions, as happens for various models in
nuclear physics \cite{Ia}. However, for non-multiplicity free
reductions, the subgroup does not always provide a sufficient
number of labels to characterize the basis states without
ambiguity. This happens in many of the non-canonical embeddings
and generic irreducible representations (IRREPs) of Lie algebras.
This is not necessarily a constraint, since often the interesting
representations belong to a certain type, and degeneracies are
solved directly with the available Casimir operators.

\medskip

Many different approaches to solve the so-called missing label
problem (short MLP) have been proposed in the literature, varying
from projection of states to the obtainment of subgroup scalars in
the enveloping algebras \cite{El,Sh2,Za,Que,Bi,Hu1}. Even if the
latter procedure allows to find the most general labelling
operator, the effective computation of integrity bases\footnote{An
integrity basis is formed by a finite number of elementary
labelling operators in terms of which all others may be written as
products.} is a rather complicated problem, and no effective
method is available. Among the difficulties appearing in this
approach, we remark that no general criterion to decide how many
 operators are necessary to generate an integrity basis is known. Further,
it should be expected that labelling operators have some
interpretation in a physical context, as happens for the Elliott
chain $\frak{su}(3)\supset \frak{so}(3)$ used in nuclear physics
or the Racah chain $\frak{so}(7)\supset G_{2}\supset\frak{so}(3)$
used in the description of  $f$-electron configurations
\cite{Ia,Is,Wy2}. Using the original conception of Lie groups as
groups of transformations with their infinitesimal generators, an
analytical approach using differential equations is possible, and
easily adaptable to the MLP \cite{Pe}. From this perspective,
labelling operators can be seen as particular solutions of a
certain subsystem of partial differential equations corresponding
to an embedded subalgebra. Classical operators are recovered
easily using the symmetrization map for tensors.

\medskip

The labelling problem is, to some extent, related to symmetry
breaking, and it seems therefore reasonable that contractions of
Lie algebras play some role \cite{Vi}. This ansatz was used in
\cite{C72} to observe that any reduction chain $\frak{s}\supset
\frak{s}^{\prime}$ is naturally related to some types of
inhomogeneous Lie algebras obtained by a contraction procedure.
This approach sufficed to generate physically interesting missing
label, and allowed to solve intrinsically some MLP, like those
with one labelling operator and others with a higher number.
Expansions of this procedure were developed in \cite{C75}, where
Casimir operators were decomposed with respect to the contraction
related to the MLP. The problem of generating commuting missing
label operators remained however open.

The main objective of this article is to enlarge the previous work
of \cite{C72,C75}, proving that commutativity of labelling
operators can be solved using an analytical approach. To this
extent, we have to consider a generalized Poisson bracket in the
space $S(\frak{g})$ of polynomial functions over the dual of a
reductive Lie algebra $\frak{g}$,\footnote{We recall that a Lie
algebra is called reductive if it is the direct sum of a
semisimple and an Abelian Lie algebra.} and some identities
developed in \cite{Ols} concerning the relation of generalized
Poisson brackets and the commutator in the enveloping algebra
$\mathcal{U}(\frak{g})$ via the standard symmetrization map. In
this sense, commuting labelling operators will correspond to
unsymmetrized labelling operators that are in involution with
respect to this generalized Poisson bracket.

\medskip

As example to the procedure, we obtain three commuting labelling
operators for the chain $\frak{sp}(6)\supset\frak{su}(3)\times
\frak{u}(1)$ used in the symplectic nuclear collective model. We
stress that, although an integrity basis for this MLP was derived
in \cite{GaK}, no explicit solution to this MLP has been
constructed.

\section{Missing label operators}

It is well known from classical theory that any semisimple Lie
algebra $\frak{g}$ possesses exactly $\mathcal{N}(\frak{g})=l$
independent Casimir operators, i.e., polynomials in the generators
that commute with all elements of the algebra, where $l$ denotes
the rank of the algebra. The eigenvalues of Casimir operators are
used to label without ambiguity the irreducible representations of
$\frak{g}$, while the states within a multiplet can be
distinguished using the Cartan subalgebra. In some situations,
however, these  operators are not enough to separate
multiplicities, and need additional operators to completely
classify states. As shown in \cite{Ra}, the total number of
internal labels required is thus
\begin{equation}
i=\frac{1}{2}(\dim \frak{g}- \mathcal{N}(\frak{g})).
\end{equation}

A similar situation holds whenever we use some subalgebra
$\frak{h}$ to label the basis states of irreducible
representations of a Lie algebra $\frak{g}$. The subgroup provides
$\frac{1}{2}(\dim \frak{h}+\mathcal{N}(\frak{h}))+l^{\prime}$
labels, where $l^{\prime}$ is the number of invariants of
$\frak{g}$ that depend only on variables of the subalgebra
$\frak{h}$ \cite{Pe}. To separate states within irreducible
representations of $\frak{g}$, we need to find
\begin{equation}
n=\frac{1}{2}\left(
\dim\frak{g}-\mathcal{N}(\frak{g})-\dim\frak{h}-\mathcal{N}(\frak{h})\right)+l^{\prime}
\label{ML}
\end{equation}
additional operators, called missing label operators. The total
number of available operators of this kind is easily shown to be
twice the number of needed labels, i.e., $m=2n$. For $n>1$, it
remains the problem of determining a set of $n$ mutually commuting
operators.

\medskip
Although in general the missing label operators are neither
invariants of the algebra nor any of its subalgebras, they can
actually be determined by means of differential equations with the
same ansatz as the general invariant problem \cite{Pe,C67,BB,AA}.

\medskip

Given a Lie algebra $ \frak{g}=\left\{X_{1},..,X_{n}\; |\;
\left[X_{i},X_{j}\right]=C_{ij}^{k}X_{k}\right\}$ in terms of
generators and commutation relations, classical Casimir operators
are polynomials $C_{p}=\alpha^{i_{1}..i_{p}}X_{i_{1}}..X_{i_{p}}$
in the generators of $\frak{s}$ such that the constraint $
\left[X_{i},C_{p}\right]=0$,\; ($i=1..n$) is satisfied. Therefore
they are elements of the centre of the enveloping algebra
$\mathcal{U}(\frak{g})$ of $\frak{g}$. In order to cover arbitrary
Lie groups, it is convenient to use an analytical realization. The
generators of the Lie algebra $\frak{g}$ are realized in the space
$C^{\infty }\left( \frak{g}^{\ast }\right) $ by means of the
differential operators:
\begin{equation}
\widehat{X}_{i}=C_{ij}^{k}x_{k}\frac{\partial }{\partial x_{j}},
\label{Rep1}
\end{equation}
where $\left\{ x_{1},..,x_{n}\right\}$ are the components of a
covector in a dual basis of $\left\{X_{1},..,X_{n}\right\} $. The
invariants of $\frak{g}$ (in particular, the Casimir operators)
are then solutions of the following system of partial differential
equations:
\begin{equation} 
\widehat{X}_{i}F=0,\quad 1\leq i\leq n.  \label{sys}
\end{equation}
For a polynomial solution of (\ref{sys}), the standard
symmetrization map defined by
\begin{equation}
\Lambda\left(x_{i_{1}}..x_{i_{p}}\right)=\frac{1}{p!}\sum_{\sigma\in
S_{p}}X_{\sigma(i_{1})}..X_{\sigma(i_{p})}\label{Sim1}
\end{equation}
allows to recover the Casimir operators in their usual form, i.e,
as elements in the centre of $\mathcal{U}(\frak{g})$. The number
$\mathcal{N}(\frak{g})$ of functionally independent solutions of
(\ref{sys}) is obtained from the classical criteria :
\begin{equation}
\mathcal{N}(\frak{g}):=\dim \,\frak{g}- {\rm rank}\left(
C_{ij}^{k}x_{k}\right), \label{BB}
\end{equation}
where $A(\frak{g}):=\left(C_{ij}^{k}x_{k}\right)$ is the matrix
associated to the commutator table of $\frak{g}$ over the given
basis.

\medskip

If we now consider an algebra-subalgebra chain $\frak{s}\supset\frak{s}%
^{\prime}$ determined by an embedding
$f:\frak{s}^{\prime}\rightarrow \frak{s}$, the missing label
operators can be computed considering the equations of (\ref{sys})
corresponding to the generators of the subalgebra
$\frak{s}^{\prime}$. This system, as proven in \cite{Pe}, has
exactly $\mathcal{N}(f(\frak{s}^{\prime}))=\dim \frak{s}-\dim
\frak{s}^{\prime}-l^{\prime}$ solutions. Using formula (\ref{ML})
it follows further that $\mathcal{N}(f(\frak{s}^{\prime}))$ can be
expressed in terms of the number of invariants of the
algebra-subalgebra chain:
\begin{equation}
\mathcal{N}(f(\frak{s}^{\prime}))=m+\mathcal{N}(\frak{s})+\mathcal{N}(
\frak{s}^{\prime})-l^{\prime}. \label{ML2}
\end{equation}
This shows that the differential equations corresponding to the
subalgebra generators have exactly $n$ more solutions as needed to
solve the missing label problem, as expected. Even using the
analytical approach, to find a complete set of solutions for the
labelling problem is a non-trivial task.

\section{Commuting polynomials in enveloping algebras}

In this section we show a commutation property of polynomials in
enveloping algebras that will help us later to find an analytical
criterion to check whether two given labelling operators
commute.\newline Let $\frak{g}$ be a Lie algebra and
$\mathcal{U}\left( \frak{g}\right)  $ its enveloping algebra.
Further let $S\left( \frak{g}\right)  $ denote the space of
polynomials defined on the dual space $\frak{g}^{\ast}$. The
standard symmetrization of monomials (\ref{Sim1}) can be easily
extended to a one-to-one linear map
\begin{equation}
\Lambda:S\left(  \frak{g}\right)  \rightarrow\mathcal{U}\left(  \frak{g}%
\right).\label{Sim2}
\end{equation}
In particular, if $x_{k_{1}}..x_{k_{p}}$ is a monomial of degree
$p$, then
for any $\sigma\in S_{p}$ we have the identity%
\begin{equation*}
\Lambda\left(  x_{k_{1}}..x_{k_{p}}\right)  =\Lambda\left(
x_{\sigma\left( k_{1}\right)  }..x_{\sigma\left(  k_{p}\right)
}\right)  .\label{Sim4}
\end{equation*}
Conversely, given a polynomial
$P=c^{k_{1}..,k_{n}}X_{k_{1}}..X_{k_{n}} \in\mathcal{U}\left(
\frak{g}\right)  $, we find its analytical counterpart
$\pi(P)=P^{\prime}=c^{k_{1}..,k_{n}}x_{k_{1}}..x_{k_{n}}$ by
simply replacing the generator $X_{i}$ by the corresponding
variable $x_{i}$ of $\frak{g}^{\ast}$.

\smallskip

It is well known that $\mathcal{U}\left(  \frak{g}\right)  $ is
naturally filtered \cite{Di}. For any positive integer $n$, let
$\mathcal{U}_{n}\left( \frak{s}\right)  $ be the subspace of
$\mathcal{U}\left( \frak{g}\right)  $ generated by the products
$X_{1}...X_{p}$, where $p\leq n$. Clearly $\mathcal{U}_{n}\left(
\frak{g}\right) \subset\mathcal{U}_{n+1}\left(  \frak{g}\right) $
for all
$n$, and $\mathcal{U}\left(  \frak{g}\right)  =\bigcup_{n}\mathcal{U}%
_{n}\left(  \frak{g}\right)  $. Given an element
$u\in\mathcal{U}\left(\frak{g}\right)  $, we say that it has filtration $p$ if $u\in\mathcal{U}%
_{p}\left(  \frak{g}\right)  $ and $u\notin\mathcal{U}_{p-1}\left(
\frak{g}\right)  $.

\smallskip

It should be observed that, even if $\pi\left(  u\right)
=\pi\left( v\right)  $, the elements $u$ and $v$ do not
necessarily commute. Taking e.g., $u=X_{1}X_{2}$ and
$v=X_{2}X_{1}$, a routine verification shows that
\begin{eqnarray*}
\fl \left[  u,v\right] =-\left[  X_{2},X_{1}\right]
X_{2}X_{1}-\left[ X_{1},X_{2}\right]
X_{2}X_{1}-\left[  X_{2},\left[  X_{1},X_{2}\right]  \right]  X_{1}%
-X_{2}\left[  X_{1},\left[  X_{2},X_{1}\right]  \right]  \\
\lo= -\left[  X_{2},\left[  X_{1},X_{2}\right]  \right]
X_{1}-X_{2}\left[ X_{1},\left[  X_{2},X_{1}\right]  \right]  .
\end{eqnarray*}

This is an immediate consequence of the fact that the variables
appearing are the same, but in reverse order. This situation can
be generalized in straightforward manner to an arbitrary number of
generators, covering also the case where $\pi\left( u\right) \neq
\pi\left(  v\right)  $.

\begin{lemma}
Let $P=X_{1}..X_{r}P_{1}\in U_{p}\left(  \frak{g}\right)  ,\;Q=X_{r}%
..X_{1}Q_{1}\in U_{q}\left(  \frak{g}\right)  $ such that $\left[  P_{1}%
,Q_{1}\right]  =0$ and $\left[  X_{i},P_{1}\right]  =\left[  X_{i}%
,Q_{1}\right]  =0$ for any $r=1,..,l.$ Then the commutator
satisfies $\left[ P,Q\right] \in U_{p+q-2}\left( \frak{g}\right)$
and $\left[ P,Q\right] \notin U_{p+q-1}\left( \frak{g}\right)$.
\end{lemma}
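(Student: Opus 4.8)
The natural framework is the associated graded algebra. By the Poincar\'e--Birkhoff--Witt theorem, $\mathrm{gr}\,\mathcal{U}(\frak g)=\bigoplus_{p}\mathcal{U}_{p}(\frak g)/\mathcal{U}_{p-1}(\frak g)$ is isomorphic to $S(\frak g)$, and the commutator of $\mathcal{U}(\frak g)$ descends to the Lie--Poisson bracket $\{\,,\,\}$ on $S(\frak g)$ attached to $\frak g$; concretely, if $u$ has filtration $p$ and $v$ has filtration $q$ with leading symbols $\pi(u)\in S^{p}(\frak g)$ and $\pi(v)\in S^{q}(\frak g)$, then $[u,v]\in\mathcal{U}_{p+q-1}(\frak g)$ and the class of $[u,v]$ in $\mathcal{U}_{p+q-1}(\frak g)/\mathcal{U}_{p+q-2}(\frak g)=S^{p+q-1}(\frak g)$ equals $\{\pi(u),\pi(v)\}$. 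This is precisely the symbol-level relation between commutators and generalized Poisson brackets exploited in \cite{Ols}. Hence the assertion $[P,Q]\in\mathcal{U}_{p+q-2}(\frak g)$ is equivalent to the identity $\{\pi(P),\pi(Q)\}=0$ in $S(\frak g)$, and the plan is to verify the latter directly.

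First I would read off the symbols straight from the definition of $\pi$: $\pi(P)=x_{1}\cdots x_{r}\,a$ with $a:=\pi(P_{1})\in S^{p-r}(\frak g)$, and $\pi(Q)=x_{r}\cdots x_{1}\,b=x_{1}\cdots x_{r}\,b$ with $b:=\pi(Q_{1})\in S^{q-r}(\frak g)$, the two monomial factors coinciding because $S(\frak g)$ is commutative --- this is the only place where the reversed ordering $X_{r}\cdots X_{1}$ in $Q$ is used. The hypotheses translate into $\{x_{i},a\}=0=\{x_{i},b\}$ for $1\le i\le r$ (from $[X_{i},P_{1}]=[X_{i},Q_{1}]=0$) and $\{a,b\}=0$ (from $[P_{1},Q_{1}]=0$). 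By the Leibniz rule $\{x_{1}\cdots x_{r},a\}=\sum_{i=1}^{r}x_{1}\cdots x_{i-1}x_{i+1}\cdots x_{r}\,\{x_{i},a\}=0$, and likewise $\{x_{1}\cdots x_{r},b\}=0$. Expanding $\{x_{1}\cdots x_{r}\,a,\;x_{1}\cdots x_{r}\,b\}$ using that $\{\,,\,\}$ is a derivation in each slot, every resulting term carries (up to sign) one of the factors $\{x_{1}\cdots x_{r},a\}$, $\{x_{1}\cdots x_{r},b\}$ or $\{a,b\}$, so the whole expression reduces to $(x_{1}\cdots x_{r})^{2}\{a,b\}=0$. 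Therefore $\{\pi(P),\pi(Q)\}=0$ and $[P,Q]\in\mathcal{U}_{p+q-2}(\frak g)$.

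There remains the second assertion; since $\mathcal{U}_{p+q-2}(\frak g)\subset\mathcal{U}_{p+q-1}(\frak g)$, I read it as $[P,Q]\notin\mathcal{U}_{p+q-3}(\frak g)$, i.e. that the filtration of $[P,Q]$ is exactly $p+q-2$. For this one must descend one further step and identify the class of $[P,Q]$ in $S^{p+q-2}(\frak g)$. Writing $P=\Lambda(\pi(P))+P'$ and $Q=\Lambda(\pi(Q))+Q'$ with $P',Q'$ of strictly smaller filtration and using $\{\pi(P),\pi(Q)\}=0$, this class is governed by the second-order term of the symmetrization map --- the contribution quadratic in the structure constants $C_{ij}^{k}$, again of the type made explicit in \cite{Ols}. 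A direct computation, already visible in the displayed example $u=X_{1}X_{2}$, $v=X_{2}X_{1}$ where $[u,v]=-[X_{2},[X_{1},X_{2}]]X_{1}+X_{2}[X_{1},[X_{1},X_{2}]]$ carries a genuine element of $S^{2}(\frak g)$, shows that this term does not vanish for generic $\frak g$. I expect this to be the only delicate point: unlike the containment $[P,Q]\in\mathcal{U}_{p+q-2}(\frak g)$, it is not a formal consequence of the Poisson structure but requires the explicit subleading symbol, and it genuinely degenerates in special cases (for instance when the relevant iterated commutators vanish, as for abelian $\frak g$), so some genericity on $\frak g$ is tacitly needed.
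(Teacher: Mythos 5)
Your proof of the substantive claim, $[P,Q]\in\mathcal{U}_{p+q-2}(\frak{g})$, is correct, but it takes a different route from the paper. The paper argues directly inside $\mathcal{U}(\frak{g})$: reordering the first $r$ generators of $Q$ gives $Q=X_{1}\cdots X_{r}Q_{1}-\sum(\ldots)$, where each correction term carries one bracket $\left[X_{l},X_{k}\right]$ and so has filtration at most $q-1$; since the hypotheses make $P$ commute with $X_{1}\cdots X_{r}Q_{1}$, the commutator $[P,Q]$ reduces to commutators of $P$ with these lower-filtration corrections and lands in $\mathcal{U}_{p+q-2}(\frak{g})$ by $\left[\mathcal{U}_{a},\mathcal{U}_{b}\right]\subset\mathcal{U}_{a+b-1}$. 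You instead pass to the associated graded algebra: by PBW the top-degree class of $[P,Q]$ is the Lie--Poisson (Berezin) bracket of the symbols, and the Leibniz rule together with $\{x_{i},a\}=\{x_{i},b\}=\{a,b\}=0$ annihilates that class. The two arguments rest on the same filtration fact, and your symbol computation is sound (the hypotheses in $\mathcal{U}(\frak{g})$ do imply the vanishing of the corresponding brackets of symbols); what your version buys is that it makes the link with equation (\ref{Olsh}) explicit, i.e.\ this lemma becomes the factorizable counterpart of the mechanism behind Theorem 1, whereas the paper's computation is more elementary and exhibits concretely which lower-order terms survive.

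Concerning the second assertion, your suspicion is justified: as printed, $[P,Q]\notin\mathcal{U}_{p+q-1}(\frak{g})$ contradicts the first assertion, and the paper's own justification establishes only the containment. The intended meaning is evidently that the commutator has no surviving component of filtration $p+q-1$ (i.e.\ the containment restated, in contrast with Lemma 2 for non-factorizable pairs), not that the filtration equals $p+q-2$ exactly; the latter reading, as you correctly note, fails without extra hypotheses (abelian $\frak{g}$, or $[P,Q]=0$). So declining to prove the strong reading is not a gap relative to the paper, and your treatment of the first claim is, if anything, the more transparent one.
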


As remarked, this special case is not very surprising, since
products of this kind are degenerate in some sense. Actually,
reordering the first $r$ generators of $Q$, we can rewrite this
polynomial as
$Q=X_{1}..X_{r}Q_{1}-\sum_{k,l=1..r}X_{r}...\widehat{X}_{k}\widehat{X}_{l}\left[
X_{l},X_{k}\right]...X_{1}Q_{1}$. This shows at once that the
commutator reduces to
$\left[P,Q\right]=-\sum_{k,l=1..r}\left[X_{1}..X_{r},
X_{r}...\widehat{X}_{k}\widehat{X}_{l}\left[
X_{l},X_{k}\right]...X_{1}\right]P_{1}Q_{1}$, thus $\left[
P,Q\right] \in U_{p+q-2}\left(  \frak{g}\right)$. Observe in
particular that, in this case, we have that the projections
$\pi(P)$ and $\pi(Q)$ share a common factor. For the situation
that interests us, however, such products are very unlikely to
appear, thus in the following we exclude them of
our analysis. We shall say that a pair of monomials $P=X_{i_{1}%
}...X_{i_{p}}\in\mathcal{U}_{p}\left(  \frak{g}\right)$, $Q=X_{j_{1}%
}...X_{j_{q}}\in\mathcal{U}_{q}\left(  \frak{g}\right)  $ such
that $\pi\left(  P\right)  \neq\pi\left(  Q\right)  $ is
factorizable if they can be written in the form
\begin{equation}
P=X_{1}^{a_{1}}..X_{l}^{a_{l}}P_{1}\in U_{p}\left( \frak{s}\right)
,\;Q=X_{l}^{a_{l}}..X_{1}^{a_{1}}Q_{1}\in U_{q}\left(
\frak{s}\right),\label{FAK}
\end{equation}
where $\left[  P_{1},Q_{1}\right]  =0$ and $\ \left[
X_{i},P_{1}\right] =\left[  X_{i},Q_{1}\right]  =0,\;i=1,..,l$.
The pair $P,Q$ is non-factorizable if no decomposition of the
preceding type exists.

\begin{lemma}
Let $P=X_{i_{1}}...X_{i_{p}}\in\mathcal{U}_{p}\left(
\frak{g}\right)  $ and
$Q=X_{j_{1}}...X_{j_{q}}\in\mathcal{U}_{q}\left(  \frak{g}\right)
$ be non-factorizable pair. If $\left[  P,Q\right]\neq 0$, then
$\left[  P,Q\right]$ has at least one term of filtration $p+q-1$.
\end{lemma}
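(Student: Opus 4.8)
The plan is to analyze the commutator $[P,Q]$ by tracking the top-filtration terms that arise when one moves all the $X_{j_k}$ past all the $X_{i_m}$. Write both monomials in the filtered algebra and expand $PQ-QP$ by repeatedly applying the defining relation $X_aX_b = X_bX_a + [X_a,X_b]$. Every time we swap an adjacent pair of generators, we either keep a product of $p+q$ generators (filtration $p+q$) or we replace two generators by one element $[X_a,X_b]\in\frak{g}$, dropping the filtration by one, so the resulting term has filtration $\le p+q-1$. The key observation is that the full reordering that turns $PQ$ into a sum of reordered monomials, and similarly $QP$, produces a leading filtration-$(p+q)$ part that is \emph{the same} for $PQ$ and for $QP$ — namely the symmetrized product $\Lambda(\pi(P)\pi(Q)) = \Lambda(\pi(Q)\pi(P))$ up to lower-order corrections — so these top terms cancel in $[P,Q]$. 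Hence $[P,Q]\in\mathcal{U}_{p+q-1}(\frak{g})$ automatically; the real content of the lemma is that the filtration-$(p+q-1)$ part does \emph{not} also vanish, unless $[P,Q]=0$ outright.

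For that, I would compute the filtration-$(p+q-1)$ component explicitly. Collecting the terms where exactly one commutator bracket was taken while all other generators were merely transposed, one finds that the $(p+q-1)$-part of $[P,Q]$ equals (modulo $\mathcal{U}_{p+q-2}$) the symmetrization of the classical Berezin/Poisson bracket $\{\pi(P),\pi(Q)\}$ of the two polynomials — this is exactly the Olshanski–Ols-type identity relating the commutator in $\mathcal{U}(\frak{g})$ and the generalized Poisson bracket on $S(\frak{g})$ referred to in the introduction. Concretely, $\pi([P,Q]) $ truncated in degree $p+q-1$ is $\{\pi(P),\pi(Q)\} = \sum_{m,k} C_{i_m j_k}^{s}\, x_s \prod_{m'\ne m}x_{i_{m'}}\prod_{k'\ne k}x_{j_{k'}}$ (a homogeneous polynomial of degree $p+q-1$). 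So the filtration-$(p+q-1)$ part of $[P,Q]$ vanishes if and only if this Poisson bracket is the zero polynomial in $S(\frak{g})$.

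The main obstacle is therefore to show: if $\{\pi(P),\pi(Q)\}=0$ as a polynomial, then in fact $[P,Q]=0$ in $\mathcal{U}(\frak{g})$, given that $P,Q$ is a \emph{non-factorizable} pair. This is where the non-factorizability hypothesis must be used — in the factorizable case (the previous Lemma) the Poisson bracket can vanish while the commutator does not, but there the drop is by two filtration degrees, which is precisely what non-factorizability is designed to exclude. I would argue by contradiction: suppose $\{\pi(P),\pi(Q)\}=0$ but $[P,Q]\ne0$, so $[P,Q]$ has some filtration $p+q-1-t$ with $t\ge1$. Expanding the commutator and examining the monomials of $\pi(P)$ and $\pi(Q)$, the vanishing of the degree-$(p+q-1)$ Poisson bracket forces cancellations among the structure constants that can only be produced by a common ordered sub-factor $X_1^{a_1}\cdots X_l^{a_l}$ appearing at the front of $P$ and (in reverse) at the front of $Q$, with the remaining factors $P_1,Q_1$ commuting with each other and with that sub-factor — i.e., exactly the factorizable form \eqref{FAK}, contradicting the hypothesis. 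The delicate part of this step is the combinatorial bookkeeping showing that identical-variable cancellation in the Poisson bracket genuinely implies such a factorization; I would handle it by inducting on the number of distinct generators occurring in $\pi(P)$ and $\pi(Q)$, peeling off one shared variable at a time. Once this is established, $\{\pi(P),\pi(Q)\}\ne0$ for a non-factorizable pair with $[P,Q]\ne0$, and since that bracket is the filtration-$(p+q-1)$ part of $[P,Q]$, the commutator has a term of filtration $p+q-1$, as claimed.
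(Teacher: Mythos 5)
Your set-up is fine and matches the paper's in spirit: the top filtration-$(p+q)$ terms of $PQ$ and $QP$ cancel, and the filtration-$(p+q-1)$ component of $\left[P,Q\right]$ is, modulo $\mathcal{U}_{p+q-2}\left(\frak{g}\right)$, the symmetrization $\Lambda\left(\left\{\pi(P),\pi(Q)\right\}\right)$ of the Berezin bracket (this follows from (\ref{Olsh}) together with the remark that $P-\Lambda(\pi(P))\in\mathcal{U}_{p-1}\left(\frak{g}\right)$). So the lemma indeed reduces to showing that for a non-factorizable pair the bracket $\left\{\pi(P),\pi(Q)\right\}$ cannot vanish (equivalently, that its vanishing forces a decomposition of type (\ref{FAK}) or $\pi(P)=\pi(Q)$). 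But this reduction is where your proof stops: the sentence ``the vanishing of the degree-$(p+q-1)$ Poisson bracket forces cancellations among the structure constants that can only be produced by a common ordered sub-factor'' is exactly the assertion to be proved, and your plan to ``induct on the number of distinct generators, peeling off one shared variable at a time'' is not carried out, nor is it clear how it would run. That combinatorial step is the entire content of the lemma, and it is what the paper's proof actually does.

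Concretely, cancellations in $\left\{\pi(P),\pi(Q)\right\}$ can a priori arise in two ways: (i) a pair of generators $X_{\alpha},X_{\beta}$ occurring in $P$ and in $Q$ with the same multiplicities but in reverse order, so that the contributions of $\left[X_{\alpha},X_{\beta}\right]$ and $\left[X_{\beta},X_{\alpha}\right]$ cancel in pairs; and (ii) different pairs of generators producing, through the structure constants $C_{j_{a}i_{b}}^{t}$, identical monomials with opposite coefficients. Your sketch addresses neither. The paper handles (i) by grouping the terms of (\ref{KL1}) into the sums $\Phi_{\left[\alpha,\beta\right]}$ with integer leading coefficients $\lambda_{\left(\alpha,\beta\right)}\geq 0$ and invoking $\pi(P)\neq\pi(Q)$ (an assumption you never use) to guarantee some $\lambda_{\left(\alpha,\beta\right)}>0$; non-factorizability then gives a pair with $\lambda_{\left(\alpha,\beta\right)}\neq 0$ and $\left[X_{\alpha},X_{\beta}\right]\neq 0$. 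It excludes (ii) by the multiplicity bookkeeping in (\ref{FUN3}): expanding $\left[X_{j_{a}},X_{i_{b}}\right]=C_{j_{a}i_{b}}^{t}X_{t}$ and comparing the multiplicities $m_{t}$, $m_{\alpha}-1$, $m_{\beta}-1$, $m_{l}+1$ in the three cases ($X_{t_{0}}$ outside $\left\{X_{1},..,X_{k}\right\}$, $X_{t_{0}}=X_{l}$ with $l\neq\alpha,\beta$, or $X_{t_{0}}=X_{j_{a}}$ resp. $X_{i_{b}}$) shows no two surviving basis terms can compensate each other. Without an argument of this kind (or a worked-out version of your induction that rules out case (ii)), your proposal has a genuine gap at its central step.
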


\begin{proof}
By the properties of the filtration in enveloping algebras, we
have that $PQ$ and $QP$ have filtration $p+q$, and further that
\[
\left[  P,Q\right]  =X_{i_{1}}...X_{i_{p}}X_{j_{1}}...X_{j_{q}}-X_{j_{1}%
}...X_{j_{q}}X_{i_{1}}...X_{i_{p}}\in\mathcal{U}_{p+q-1}\left(  \frak{g}%
\right)  .
\]
Expanding this product with the usual rule, we arrive at a sum of the type%
\begin{equation}
\left[  P,Q\right]  =-\sum_{a=1..q;\;b=1..p}X_{i_{1}}...\widehat{X}_{i_{b}%
}X_{j_{1}}..\widehat{X}_{j_{a}}\left[  X_{j_{a}},X_{i_{b}}\right]
...X_{j_{q}}X_{i_{b+1}}...X_{i_{p}}.\label{KL1}%
\end{equation}
By assumption, after reordering the sum using the
Poincar\'e-Birkhoff-Witt theorem, all terms of filtration $p+q-1$
must cancel. We argue basing on the
number of different generators in the product $PQ=X_{i_{1}}...X_{i_{p}%
}X_{j_{1}}...X_{j_{q}}$ (which coincides with that of $QP$).
Suppose there are exactly $k\leq p+q$ different generators
$X_{1},..,X_{k}$ \ in $PQ$. We further define $m_{t}$ as the
multiplicity of the generator $X_{t}$ in the product $PQ$. Then
any generator appearing in $QP$ and $PQ$ equals some $X_{t}$ for
$t=1,..,k$. It is obvious that if $X_{j_{b}}=X_{i_{a}}$, then
$\left[  X_{j_{b}},X_{i_{a}}\right]  =0$ and the term
$X_{i_{1}}...\widehat
{X}_{i_{b}}X_{j_{1}}..\widehat{X}_{j_{a}}\left[
X_{j_{a}},X_{i_{b}}\right] ...X_{j_{q}}X_{i_{b+1}}...X_{i_{p}}$
vanishes. Thus in order to evaluate the commutator $\left[
P,Q\right]  $, we only have to consider the products $\left[
X_{j_{a}},X_{i_{b}}\right]  $ $\neq0$. We now group together all
terms involving generators with the same multiplicity, i.e., for
any $X_{\alpha},X_{\beta}\in\left\{  X_{1},..,X_{k}\right\}  $
with $X_{\alpha }\neq X_{\beta}$ we define
\begin{equation}
\Phi_{\lbrack\alpha,\beta]}=\sum_{j_{a},i_{b}}X_{i_{1}}...\widehat{X}_{i_{b}%
}X_{j_{1}}..\widehat{X}_{j_{a}}\left[  X_{j_{a}},X_{i_{b}}\right]
...X_{j_{q}}X_{i_{b+1}}...X_{i_{p}},\label{KL2}
\end{equation}
where either $X_{j_{a}}=X_{\alpha},\;X_{i_{b}}=X_{\beta}$ or $X_{j_{a}%
}=X_{\beta},\;X_{i_{b}}=X_{\alpha}$. With the help of these
polynomials, the commutator $\left[ P,Q\right]  $ can be expressed
as $\left[P,Q\right]  =-\sum_{1\leq\alpha,\beta\leq
k}\Phi_{\lbrack\alpha,\beta]}$. Reordering each term $\Phi_{\left[
\alpha,\beta\right]  }$ with respect to some basis of $\frak{g}$,
$\Phi_{\left[ \alpha,\beta\right]  }$ can be rewritten as
\begin{equation}
\Phi_{\lbrack\alpha,\beta]}=\lambda_{\left(  \alpha,\beta\right)  }\,X_{i_{1}%
}...\widehat{X}_{i_{b}}X_{j_{1}}..\widehat{X}_{j_{a}}\left[  X_{j_{a}%
},X_{i_{b}}\right]
...X_{j_{q}}X_{i_{b+1}}...X_{i_{p}}+L.O.T.,\label{FUN}
\end{equation}
where $\lambda_{\left(  \alpha,\beta\right)  }\geq0$ is an
integer. The only possibility that $\lambda_{\left(
\alpha,\beta\right)  }=0$ occurs is that the commutator $\left[
X_{j_{a}},X_{i_{b}}\right]  $ appears with the same multiplicity
as $\left[  X_{i_{b}},X_{j_{a}}\right]$. This means that the
generators $X_{j_{a}},X_{i_{b}}$ appear in both $P$ and $Q$ with
the same multiplicity, but in reverse order. By assumption,
$\pi\left(  P\right)
\neq\pi\left(  Q\right)  $, thus there exists at least one generator $X_{l}%
\in\left\{  X_{1},..,X_{k}\right\}  $ such that its multiplicity
in $P$ is different from that in $Q$.\footnote{This also covers
the case where a generator appears in $P$ and not in $Q$, or
conversely. } This ensures the existence of one pair of indices
$\alpha,\beta$ such that $\lambda_{\left( \alpha,\beta\right)
}>0$. In this case, the term of filtration $p+q-1$ does not vanish
unless $\left[  X_{j_{a}},X_{i_{b}}\right]  =0$. Discarding the
zero terms, $\left[  P,Q\right]  $ is expressed as follows as a
linear combination of basis elements:
\begin{equation}
\fl \left[  P,Q\right]  =-\sum_{1\leq\alpha,\beta\leq
k}\lambda_{\left( \alpha,\beta\right)
}\,X_{i_{1}}...\widehat{X}_{i_{b}}X_{j_{1}}..\widehat
{X}_{j_{a}}\left[  X_{j_{a}},X_{i_{b}}\right]  ...X_{j_{q}}X_{i_{b+1}%
}...X_{i_{p}}+L.O.T.,\label{FUN2}
\end{equation}
where $\lambda_{\left( \alpha,\beta\right)  }\neq0$. Observe now
that if for any pair $\left\{  X_{\alpha},X_{\beta}\right\}  $
such that $\lambda_{\left( \alpha,\beta\right)  }\neq0$ we have
that $\left[ X_{\alpha},X_{\beta }\right]  =0$, then $P$ and $Q$,
after some reordering, would admit a decomposition of type
(\ref{FAK}), and therefore be a factorizable pair, which
contradicts the initial assumption on their structure. Thus there
exists at least a pair of indices $\alpha,\beta$ in $\left\{
1,..,k\right\} $ such that $\lambda_{\left(  \alpha,\beta\right)
}\neq0$ and $\left[ X_{\alpha},X_{\beta}\right]  \neq0$.

If $\left[  P,Q\right]  $ is a sum of products of at most $p+q-2$
elements, then the basis elements
$X_{i_{1}}...\widehat{X}_{i_{b}}X_{j_{1}}..\widehat
{X}_{j_{a}}\left[  X_{j_{a}},X_{i_{b}}\right]  ...X_{j_{q}}X_{i_{b+1}%
}...X_{i_{p}}$ of (\ref{FUN2}) give rise to the dependence
relation
\begin{equation}
-\sum_{1\leq\alpha,\beta\leq k}\lambda_{\left( \alpha,\beta\right)
}\,X_{i_{1}}...\widehat{X}_{i_{b}}X_{j_{1}}..\widehat{X}_{j_{a}}\left[
X_{j_{a}},X_{i_{b}}\right]
...X_{j_{q}}X_{i_{b+1}}...X_{i_{p}}=0.\label{FUN3}
\end{equation}
We observe that each term of this sum has, leaving aside the
commutator $\left[  X_{j_{a}},X_{i_{b}}\right]  $, multiplicity
$m_{t}$ in the generators $X_{t}\neq\left\{
X_{\alpha},X_{\beta}\right\}  $, multiplicity $m_{\alpha }-1$ in
$X_{\alpha}$ and $m_{\beta}-1$ in $X_{\beta}$, respectively. Let
$\left[  X_{j_{a}},X_{i_{b}}\right]
=C_{j_{a}i_{b}}^{t}X_{t}\in\frak{g}$.
If there exists some $t_{0}$ such that $X_{t_{0}}\notin\left\{  X_{1}%
,..,X_{k}\right\}  $, then the term $\lambda_{\left(
\alpha,\beta\right)
}C_{j_{a}i_{b}}^{t_{0}}X_{i_{1}}...\widehat{X}_{i_{b}}X_{j_{1}}..\widehat
{X}_{j_{a}}X_{t_{0}}...X_{j_{q}}X_{i_{b+1}}...X_{i_{p}}$ has
multiplicity one in the generator $X_{t_{0}}$, and it follows at
once by the multiplicity in the generators $\left\{
X_{1},..,X_{k}\right\}  $ that this term cannot \ be compensated
with the remaining terms of (\ref{FUN3}). If $X_{t}\in\left\{
X_{1},..,X_{k}\right\}  $ for any $t$, then two possibilities are
given:

\begin{enumerate}
\item $X_{t_{0}}=X_{l}$ for some $l\neq\alpha,\beta$. Here
$\lambda_{\left(  \alpha,\beta\right)  }C_{j_{a}i_{b}}^{t_{0}}X_{i_{1}%
}...\widehat{X}_{i_{b}}X_{j_{1}}..\widehat{X}_{j_{a}}X_{t_{0}}...X_{j_{q}%
}X_{i_{b+1}}...X_{i_{p}}$ has multiplicity $m_{t}$ in the
generators $X_{t}\neq\left\{  X_{\alpha},X_{\beta},X_{l}\right\}
$, multiplicity $m_{\alpha}-1$ in $X_{\alpha}$,. $m_{\beta}-1$ in
$X_{\beta}$ and $m_{l}+1$ in $X_{l}$, respectively. No other term
in (\ref{FUN3}) has the same multiplicities, thus $\left[
P,Q\right] \notin\mathcal{U}_{p+q-2}\left(  \frak{g}\right)$.

\item $X_{t_{0}}=X_{j_{a}}$ ( or $X_{t_{0}}=X_{i_{b}}$).
In this case,  $\lambda_{\left(  \alpha,\beta\right)  }C_{j_{a}i_{b}}^{t_{0}%
}X_{i_{1}}...\widehat{X}_{i_{b}}X_{j_{1}}..\widehat{X}_{j_{a}}X_{t_{0}%
}...X_{j_{q}}X_{i_{b+1}}...X_{i_{p}}$ has multiplicity $m_{t}$ in
the generators $X_{t}\neq\left\{  X_{\beta}\right\}  $ and
multiplicity $m_{\beta}-1$ in $X_{\beta}$. Again, it is not
possible that this term cancels with the remaining basis elements
of (\ref{FUN3}), since their multiplicities in the generators are
not the same.
\end{enumerate}
Therefore, if the commutator does not vanish, there is at least
one term of filtration $p+q-1$ that does not cancel.
\end{proof}

\section{Berezin bracket}

In this section, using an an important relation between the
commutator in the enveloping algebra $\mathcal{U}\left(
\frak{g}\right)  $ of a Lie algebra $\frak{g}$ and a generalized
Poisson bracket in $S(\frak{g})$, we show that the commutativity
requirement of labelling operators can be expressed, in the
commutative frame, by means of an involution condition with
respect to a special generalized Poisson bracket. This enables us
to find an analytical characterization for missing label operators
to commute mutually.

\smallskip

Now let $f=c^{k_{1}..,k_{p}}x_{k_{1}%
}..x_{k_{p}}\in S\left(  \frak{g}\right)  $ be a given a
homogeneous polynomial. It follows from (\ref{Sim2}) that writing
its image $\Lambda\left( f\right)$ in symmetric form is unique.
Moreover, choosing an arbitrary permutation $\sigma\in S_{p}$, we
have that
\begin{eqnarray}
\Lambda\left(  f\right)    & =\frac{1}{p!}\sum_{\nu\in\mathcal{S}_{p}%
}c^{k_{1}..,k_{p}}X_{\nu\left(  k_{1}\right) }..X_{\nu\left(
k_{p}\right)  }=\left(  c^{k_{1}..,k_{p}}X_{k_{1}}..X_{k_{p}%
}+L.O.T.\right)  \nonumber\\
& =\left(  c^{k_{1}..,k_{n}}X_{\sigma\left( k_{1}\right)
}..X_{\sigma\left(  k_{p}\right)  }+L.O.T.\right) ,\label{Sim3}
\end{eqnarray}
where L.O.T. designates the terms of order $q\leq p-1$.

\bigskip

In \cite{Ber} it was shown that for any $X_{i}\in\frak{g}$ and any
symmetrized polynomial $\Lambda(f)$, the commutator
$\left[X_{i},\Lambda(f)\right]$ in $\mathcal{U}(\frak{g})$
corresponds, under the symmetrization map $\Lambda$, to a linear
operator $A_{x_{i}}(f)$ in $S(\frak{g})$, where
$\Lambda(x_{i})=X_{i}$. These operators enabled to solve the
problem of finding a polynomial in $S(\frak{g})$, the
symmetrization $\Lambda(f)$ of which coincides with the commutator
in $\mathcal{U}(\frak{g})$ of two previously given symmetrized
polynomials $g$ and $h$. More specifically, given two given
(homogeneous) polynomials $g,h\in S(\frak{g})$, the commutator
$\left[ \Lambda\left( g\right) ,\Lambda\left( h\right) \right]
=\Lambda\left( g\right) \Lambda\left(  h\right) -\Lambda\left(
h\right) \Lambda\left( g\right)  $ of their symmetrizations is the
image, under $\Lambda$, of some polynomial $f\in
S(\frak{g})$.\footnote{Expressed in brackets, this condition is
given by $\left[ \Lambda\left( g\right) ,\Lambda\left( h\right)
\right]=\Lambda(f)$.} A routine but long and tedious computation
\cite{Ber} shows that the latter polynomial $f\in S\left(
\frak{g}\right)$ is given by the expression
\begin{equation}
\fl f=-C_{ij}^{k}x_{k}\frac{\partial g}{\partial
x_{i}}\frac{\partial h}{\partial
x_{j}}+F\left(x_{k}, \frac{\partial^{2}g}{\partial x_{i}\partial x_{j}}%
,\frac{\partial^{2}h}{\partial x_{j}\partial
x_{i}},...,\frac{\partial^{p}g}{\partial x_{j_{1}}..\partial x_{j_{p}}}%
,\frac{\partial^{p}h}{\partial x_{j_{1}}..\partial
x_{j_{p}}},...\right),\label{BeF}
\end{equation}
where $F$ is a polynomial the terms of which involve derivatives
of order $d\geq 2$ of $g$ and $h$ [(\ref{Ber}), equation (31)]. In
particular, if $g,h$ are homogeneous of degrees $p$ and $q$,
respectively, then $F$ can be decomposed as a sum of homogeneous
polynomials of degrees $\leq p+q-2$ \cite{Ber}.

\medskip

Formula (\ref{BeF}) already generalizes, in some sense, the
analytical approach to compute Casimir operators of Lie algebras
\cite{Pe}. Observe that taking $g=x_{l}$, the preceding expression
reduces to
\begin{equation}
f=-C_{ij}^{k}x_{k}\frac{\partial x_{l}}{\partial
x_{i}}\frac{\partial h}{\partial
x_{j}}=-C_{lj}^{k}x_{k}\frac{\partial h}{\partial
x_{j}}=\widehat{X}_{l}(h), \label{BeFa}
\end{equation}
where $F$ vanishes because of $\frac{\partial^{p} x_{l}}{\partial
x_{j_{1}}..x_{j_{p}}}=0$ for any $p\geq 2$. In particular, if $h$
is a Casimir invariant of $\frak{g}$, then $\widehat{X}_{l}(h)=0$,
and by the symmetrization map $\Lambda$ we obtain that
$\left[X_{l},\Lambda(h)\right]=0$.\footnote{This identification
has already been used in the frame of completely integrable
Hamiltonian systems \cite{Per}.} This fact suggests that
(\ref{BeF}) can be used to obtain an analytical criterion for the
commutativity of labelling operators.

\medskip

If we restrict to the important case where $g,h$ are homogeneous
polynomials, then equation (\ref{BeF}) can be rewritten as
\begin{equation}
\left[  \Lambda\left(  g\right)  ,\Lambda\left(  h\right)  \right]
=\Lambda\left(  \left\{  g,h\right\}  \right)  +{\rm
L.O.T.},\label{Olsh}
\end{equation}
where
\begin{equation}
\left\{  g,h\right\}  =-C_{ij}^{k}x_{k}\frac{\partial g}{\partial x_{i}}%
\frac{\partial h}{\partial x_{j}}\label{Ber}
\end{equation}
is called the Berezin bracket of $g$ and $h$ and the lower order
terms correspond to the symmetrization of the polynomial $F$
\cite{Ols,Ber}. It can further be shown that the Berezin bracket
is a generalized Poisson bracket on the space $S(\frak{g})$
\cite{Ols,Per}. We observe that equation (\ref{Olsh}) is valid for
more general Poisson brackets \cite{Ols}.

\medskip

Let $F=c^{k_{1}..,k_{p}}X_{k_{1}}..X_{k_{p}}$,
$G=c^{j_{1}..,j_{q}}X_{j_{1} }..X_{j_{q}}$ be two polynomials in
the generators of $\frak{g}$. Generalizing the situation of the
previous section, we will say that $F,G$ forms a non-factorizable
pair if for any pair
$\left\{c^{k_{1}..,k_{p}},\;c^{j_{1}..,j_{q}}\right\}$ the
monomials $\left\{  X_{k_{1}}..X_{k_{p}},\;X_{j_{1}}..X_{j_{q}%
}\right\}$ do not admit a decomposition of type (\ref{FAK}).

\begin{theorem}
Let $F,G$ be a non-factorizable pair of polynomials in the
enveloping algebra $\mathcal{U}(\frak{g})$ of $\frak{g}$ such that
$F=\Lambda(f),\; G=\Lambda(g)$ for some homogeneous polynomials
$f,g\in S(\frak{g})$. Then $\left[ F,G\right]=0$ if and only if
$\left\{ f,g\right\} =0$, i.e., if the functions $f,g$ are in
involution with respect to the Berezin bracket.
\end{theorem}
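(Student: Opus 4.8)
The plan is to prove both implications by connecting the vanishing of $\{f,g\}$ to the filtration behaviour of $[F,G]$, using the Berezin identity (\ref{Olsh}) together with Lemma 2.

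First I would establish the easy direction: suppose $\{f,g\}=0$ but $[F,G]\neq 0$. Writing $F=\Lambda(f)$, $G=\Lambda(g)$ with $f,g$ homogeneous of degrees $p,q$, the identity (\ref{Olsh}) gives $[F,G]=\Lambda(\{f,g\})+\text{L.O.T.}=\Lambda(0)+\text{L.O.T.}$, where the lower order terms are the symmetrization of the polynomial $F$ of (\ref{BeF}), which by the cited property decomposes into homogeneous pieces of degree $\le p+q-2$. Hence $[F,G]\in\mathcal{U}_{p+q-2}(\frak{g})$, i.e. $[F,G]$ has \emph{no} term of filtration $p+q-1$. On the other hand, since $F,G$ form a non-factorizable pair and $[F,G]\neq 0$, Lemma 2 (applied to the monomials making up $F$ and $G$, or rather to the leading homogeneous parts which are genuine symmetrized monomials) forces $[F,G]$ to have at least one term of filtration exactly $p+q-1$. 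This contradiction shows $[F,G]=0$.

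For the converse, suppose $[F,G]=0$. Then in particular $[F,G]\in\mathcal{U}_{p+q-2}(\frak{g})$ trivially, so comparing filtration-$(p+q-1)$ components in (\ref{Olsh}) we get $\Lambda(\{f,g\})=0$; since $\Lambda$ is injective (it is the one-to-one map (\ref{Sim2})) and $\{f,g\}$ is the degree-$(p+q-1)$ homogeneous polynomial appearing as the top term, we conclude $\{f,g\}=0$. More carefully: by (\ref{Ber}), $\{f,g\}=-C_{ij}^{k}x_{k}\,\partial_i f\,\partial_j g$ is homogeneous of degree $p+q-1$, and in (\ref{Olsh}) it is the unique term of that degree on the right-hand side (the L.O.T. having degree $\le p+q-2$); matching it against the left-hand side, which vanishes, and using injectivity of $\Lambda$, yields $\{f,g\}=0$.

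The main obstacle is the first direction, specifically making rigorous the passage from ``$F,G$ is a non-factorizable pair of polynomials'' to an application of Lemma 2, which as stated concerns single non-factorizable monomials $P,Q$. One has to argue that, after expanding $F=\sum c^{k_1\dots k_p}X_{k_1}\cdots X_{k_p}$ and $G$ similarly and reordering via Poincar\'e--Birkhoff--Witt, the filtration-$(p+q-1)$ part of $[F,G]$ is governed by the leading terms $c^{k_1\dots k_p}x_{k_1}\cdots x_{k_p}=f$ and $g$, and that non-factorizability of the pair $F,G$ (in the sense just defined for polynomials) prevents the cancellation of all top-filtration contributions — exactly the phenomenon ruled out in the proof of Lemma 2 via the multiplicity bookkeeping on the coefficients $\lambda_{(\alpha,\beta)}$. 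Once this reduction is in place, the rest is the bookkeeping already carried out in Lemma 2, so I would state the reduction carefully and then invoke Lemma 2 rather than repeat its proof.
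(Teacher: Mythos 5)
Your proposal follows essentially the same route as the paper: the direction $[F,G]=0\Rightarrow\{f,g\}=0$ via the identity (\ref{Olsh}), homogeneity of $\{f,g\}$ and injectivity of $\Lambda$, and the converse via the absence of filtration-$(p+q-1)$ terms combined with Lemma 2 and non-factorizability. The reduction you flag as the main obstacle is exactly what the paper supplies: it notes that the filtration-$(p+q-1)$ part of $[\Lambda(f),\Lambda(g)]$ coincides with that of the leading representative commutator $a^{j_1\dots j_p}b^{k_1\dots k_q}\left[X_{\theta_1(j_1)}\cdots X_{\theta_1(j_p)},X_{\theta_2(k_1)}\cdots X_{\theta_2(k_q)}\right]$ for \emph{every} choice of orderings $\theta_1,\theta_2$, applies Lemma 2 to conclude each such commutator vanishes entirely, and then sums over all permutations of $S_p$ and $S_q$ to recover $[\Lambda(f),\Lambda(g)]=0$ exactly --- this averaging step is what eliminates any worry about cross terms with the lower-order parts of the symmetrizations, so make it explicit when you write up the reduction.
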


\begin{proof}
Let $f=a^{k_{1}..,k_{p}}x_{k_{1}} ..x_{k_{p}}$ and $g
=b^{j_{1}..,j_{q}}x_{j_{1}}..x_{j_{q}}$ be homogeneous polynomials
of degrees $p$ and $q$, respectively, and let $\Lambda\left(
f\right)  =\frac{1}{p!}\sum_{\sigma\in
\mathcal{S}_{p}}c^{k_{1}..,k_{p}}X_{\sigma\left(  k_{1}\right)  }%
..X_{\sigma\left(  k_{p}\right)  }$ and $\Lambda\left(  g\right)
=\frac
{1}{q!}\sum_{\sigma\in\mathcal{S}_{q}}c^{j_{1}..,j_{q}}X_{\sigma\left(
j_{1}\right)  }..X_{\sigma\left(  j_{q}\right)  }$ be their
symmetrizations. Suppose that their commutator in
$\mathcal{U}(\frak{g})$ vanishes:
\begin{equation}
\left[  \Lambda\left(  f\right)  ,\Lambda\left(  g\right)  \right]
=0.\label{Kom1}
\end{equation}
By homogeneity, formula (\ref{Olsh}) gives rise to the identity
\begin{equation}
\left[  \Lambda\left(  f\right)  ,\Lambda\left(  g\right)  \right]
=\Lambda\left(  \left\{  f,g\right\}  \right)  +L.O.T.=0
\end{equation}
Since the Berezin bracket $\left\{  f,g\right\}
=C_{ij}^{k}x_{k}\frac{\partial f}{\partial x_{i}}\frac{\partial
g}{\partial x_{j}}$ is a homogeneous polynomial of degree $p+q-1$,
the symmetric representative of $\Lambda\left(  \left\{
f,g\right\}  \right)  $ is unique,\footnote{It is at this step
where the homogeneity is essential for the validity of the
argument, since equation (\ref{Olsh}) does not necessarily hold if
homogeneity is not given.} thus the vanishing of the commutator
(\ref{Kom1}) implies in particular that $\Lambda\left( \left\{
f,g\right\} \right)=0$, because of the injectivity of
(\ref{Sim2}), and therefore that
\begin{equation}
\left\{  f,g\right\}  =0.
\end{equation}
This proves that $f$ and $g$ are in involution with respect to the
Berezin bracket.
\medskip
Conversely, let  $f$ and $g$ be homogeneous polynomials such that
the constraint
\begin{equation}
\left\{  f,g\right\}  =C_{ij}^{k}x_{k}\frac{\partial f}{\partial x_{i}}%
\frac{\partial g}{\partial x_{j}}=0
\end{equation}
holds. It follows from the definition of the symmetrization map
(\ref{Sim2}) that for any monomial $x_{1}...x_{k}$
\begin{equation}
\Lambda\left(  x_{1}...x_{k}\right)  =X_{1}...X_{k}+L.O.T.,
\end{equation}
This means that we can rewrite the symmetrized polynomials
$\Lambda(f)$ and $\Lambda(g)$ as
\begin{eqnarray}
\Lambda\left(  f\right)    & =\frac{1}{p!}\sum_{\sigma\in\mathcal{S}_{p}%
}a^{j_{1}..,j_{p}}X_{\sigma\left(  j_{1}\right)
}..X_{\sigma\left(
j_{p}\right)  }=\left(  a^{j_{1}..,j_{p}}X_{j_{1}}..X_{j_{p}%
}+L.O.T.\right), \nonumber \\
\Lambda\left(  g\right)    & =\frac{1}{q!}\sum_{\sigma\in\mathcal{S}_{q}%
}b^{k_{1}..,k_{q}}X_{\sigma\left(  k_{1}\right)
}..X_{\sigma\left(
k_{q}\right)  }=\left(  b^{k_{1}..,k_{q}}X_{k_{1}}..X_{k_{q}%
}+L.O.T.\right).\label{Sim5}
\end{eqnarray}
Evaluating the commutator in the enveloping algebra gives
\begin{eqnarray}
\left[  \Lambda\left(  f\right)  ,\Lambda\left(  g\right) \right]
&
=\left[  a^{j_{1}..,j_{p}}X_{j_{1}}..X_{j_{p}}+L.O.T.,b^{k_{1}%
..,k_{q}}X_{k_{1}}..X_{k_{q}}+L.O.T.\right],\nonumber  \\
& =\left(  a^{j_{1}..,j_{p}}b^{k_{1}..,k_{q}}\left[  X_{j_{1}%
}..X_{j_{p}},X_{k_{1}}..X_{k_{q}}\right]  +L.O.T.\right).
\end{eqnarray}
By formula (\ref{Olsh}), the commutator has no terms of degree
$p+q-1$, and since $\left\{X_{j_{1}}..X_{j_{p}},X_{k_{1}
}..X_{k_{q}}\right\}$ is a non-factorizable pair, it follows from
Lemma 2 that
\begin{equation}
a^{j_{1}..,j_{p}}b^{k_{1}..,k_{q}}\left[  X_{j_{1}}..X_{j_{p}},X_{k_{1}%
}..X_{k_{q}}\right]  =0.\label{Sim6}
\end{equation}
Now observe that, since $\Lambda\left(  f\right)  $ and
$\Lambda\left( g\right)  $ are the symmetrization of homogeneous
polynomials, we can arbitrarily choose permutations $\theta_{1}\in
S_{p}$ and $\theta_{2}\in S_{q}$ such that
\begin{eqnarray}
\fl \Lambda\left(  f\right)    & =\frac{1}{p!}\sum_{\sigma\in\mathcal{S}_{p}%
}a^{j_{1}..,j_{p}}X_{\sigma\left(  j_{1}\right) }..X_{\sigma\left(
j_{p}\right)  }=\left( a^{j_{1}..,j_{p}}X_{\theta_{1}\left(
j_{1}\right)  }..X_{\theta_{1}\left(  j_{p}\right)  }+L.O.T.\right)  ,\nonumber\\
\fl \Lambda\left(  g\right)    & =\frac{1}{q!}\sum_{\sigma\in\mathcal{S}_{q}%
}b^{k_{1}..,k_{q}}X_{\sigma\left(  k_{1}\right) }..X_{\sigma\left(
k_{q}\right)  }=\left( b^{k_{1}..,k_{q}}X_{\theta_{2}\left(
k_{1}\right) }..X_{\theta_{2}\left(  k_{q}\right)
}+L.O.T.\right).\label{Sim7}
\end{eqnarray}
Expanding the commutator $\left[  \Lambda\left(  f\right)
,\Lambda\left(  g\right)  \right]$ using these representatives, a
reasoning identical to that above leads to the identity
\begin{equation}
a^{j_{1}..,j_{p}}b^{k_{1}..,k_{q}}\left[  X_{\theta_{1}\left(
j_{1}\right) }..X_{\theta_{1}\left(  j_{p}\right)
},X_{\theta_{2}\left(  k_{1}\right) }..X_{\theta_{2}\left(
k_{q}\right)  }\right]  =0.\label{Sim8}
\end{equation}
These identities enable us to sum over all permutations of $S_{p}$
and $S_{q}$, which leads to the equation
\begin{equation}
\sum_{\sigma\in S_{p}}\sum_{\tau\in S_{q}}a^{j_{1}..,j_{p}}b^{k_{1}..,k_{q}%
}\left[  X_{\sigma\left(  j_{1}\right)  }..X_{\sigma\left(
j_{p}\right) },X_{\tau\left(  k_{1}\right)  }..X_{\tau\left(
k_{q}\right)  }\right]  =0.\label{Sim9}
\end{equation}
But observe that equation (\ref{Sim9}) is exactly the commutator
of $\Lambda\left( f\right)$ and $\Lambda\left(  g\right)$ if we
use their symmetric representatives, from which we conclude that
\begin{equation}
\left[  \Lambda\left(  f\right)  ,\Lambda\left(  g\right)  \right]
=0,\label{Sima1}
\end{equation}
showing that the labelling operators $F=\Lambda(f),\;
G=\Lambda(g)$ commute.
\end{proof}

\smallskip
We observe that, in the frame of Berezin brackets, equation
(\ref{BeFa}) inherits meaning as a special case corresponding to
linear polynomials. Therefore the analytical approach to compute
Casimir operators can be seen as a particular application of the
preceding result.
\smallskip

The main interest of the previous theorem relies in its
application to the MLP, in order to check that two or more
labelling operators commute with each other. Let
$\frak{s}\supset\frak{s} ^{\prime}$ be an algebra-subalgebra
chain, where both $\frak{s}$ and $\frak{s} ^{\prime}$ are
reductive Lie algebras.

\begin{corollary}
Let $F=\Lambda(f),\;G=\Lambda(g)\in\mathcal{U}\left(
\frak{s}\right)  $ be two non-factorizable missing label
operators. Then $\left[  F,G\right]  =0$ if and only if $\left\{
f,g\right\}  =0$.
\end{corollary}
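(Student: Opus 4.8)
The plan is to obtain the corollary as the specialization of Theorem~1 to an algebra--subalgebra chain, the only real work being to check that a pair of missing label operators satisfies the hypotheses required there. First I would recall, from equation (\ref{BeFa}), that for every $h\in S(\frak{s})$ one has the exact identity $[X_l,\Lambda(h)]=\Lambda(\widehat{X}_l h)$ --- the correction term $F$ in (\ref{BeF}) disappears when one of the arguments is linear, since a linear polynomial has no derivatives of order $\geq 2$. By injectivity of $\Lambda$ this shows that $F=\Lambda(f)$ commutes with a generator $X_l$ of $\frak{s}'$ if and only if $\widehat{X}_l f=0$; hence $F$ is a missing label operator for $\frak{s}\supset\frak{s}'$ exactly when $f$ is a polynomial solution of the subsystem of (\ref{sys}) attached to the generators of $\frak{s}'$. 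Thus the description of labelling operators inside $\mathcal{U}(\frak{s})$ and the analytical one in $S(\frak{s})$ correspond under $\Lambda$, and the notion of non-factorizable pair used in the corollary is the same as the one appearing in Theorem~1.

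Next I would reduce to the homogeneous situation. The subsystem $\widehat{X}_i f=0$, $X_i\in\frak{s}'$, is homogeneous with respect to the natural grading of $S(\frak{s})$ by polynomial degree, because the operators $\widehat{X}_i=C_{ij}^k x_k\,\partial/\partial x_j$ of (\ref{Rep1}) are homogeneous of degree zero; consequently each homogeneous component of a solution is again a solution, and every missing label operator is a linear combination of homogeneous ones. We may therefore assume without loss of generality that $f$ and $g$ are homogeneous, of degrees $p$ and $q$ say. With this in hand the hypotheses of Theorem~1 are met verbatim: $F,G$ is non-factorizable by assumption and $f,g$ are homogeneous, so the theorem yields at once that $[F,G]=0$ if and only if $\{f,g\}=0$. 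Reductivity of $\frak{s}$ and $\frak{s}'$ plays no role in this implication; it only ensures that the Berezin bracket (\ref{Ber}) is a bona fide generalized Poisson bracket on $S(\frak{s})$ and that the dimension counts of Section 2 hold, i.e. it is what makes the statement meaningful in the MLP context.

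I expect the only delicate point --- there is no deep obstacle here, the corollary being a genuine corollary --- to be the reduction to homogeneous representatives. One must verify that the grading of $S(\frak{s})$ is respected both by the defining equations, so that a homogeneous component of a labelling operator is still a labelling operator, and, in effect, by the bracket $\{\,\cdot\,,\,\cdot\,\}$, which lowers degree by exactly one and therefore sends a homogeneous pair of degrees $p,q$ to a homogeneous polynomial of degree $p+q-1$; it is this fact that lets one read off the leading part of the commutator via (\ref{Olsh}), identify the symmetric representative $\Lambda(\{f,g\})$ unambiguously, and argue layer by layer. Once this bookkeeping is settled there is nothing further to prove, since the non-vanishing of the top-filtration term of the commutator --- the only genuinely nontrivial ingredient --- is supplied by Lemma~2 and is inherited unchanged from the proof of Theorem~1.
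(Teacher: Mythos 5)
Your route is essentially the paper's: the corollary is stated there as an immediate specialization of Theorem~1 to the reduction chain, with the surrounding text only justifying the reductivity hypothesis (existence of polynomial labelling operators) and the plausibility of non-factorizability, and your use of (\ref{BeFa}) plus injectivity of $\Lambda$ to identify missing label operators in $\mathcal{U}(\frak{s})$ with symmetrized solutions of the subsystem of (\ref{sys}) is exactly the intended correspondence. The one step I would not let pass as written is the ``without loss of generality $f,g$ homogeneous'' reduction: it is true that the operators $\widehat{X}_i$ preserve degree, so each homogeneous component of a labelling operator is again one, but the biconditional for an inhomogeneous pair does not follow componentwise, since distinct bidegree contributions with equal total degree $p+q$ could cancel among themselves both in $\left\{ f,g\right\}$ and in the top-filtration part of $\left[ F,G\right]$, while the lower-order terms need not. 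This is harmless here because the corollary, like Theorem~1 (whose footnote stresses that homogeneity is essential for (\ref{Olsh})), is to be read for homogeneous $f,g$, which the degree-preserving system always allows one to choose; but the reduction as you phrase it is an unproved strengthening rather than a consequence of the theorem.
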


The requirement of reductive Lie algebras is imposed to guarantee
the existence of a complete basis of labelling operators formed by
polynomials \cite{Pe}. In order to complete the analytical
characterization, it should be justified that labelling operators
are always non-factorizable, in order to prevent the degenerate
case studied in the previous section. To this extent, suppose that
$P,Q$ is a pair of labelling operators such that $\left[
P,Q\right] \neq0$ and $\left\{ \pi\left(  P\right) ,\pi\left(
Q\right) \right\} =0$. By Lemmas 1 and 2, $P$ and $Q$ would have
the following shape:
\begin{eqnarray}
P  & =X_{\alpha_{1}^{1},}...X_{\alpha_{p}^{1}}P_{1}+X_{\alpha_{1}^{2}%
}..X_{\alpha_{q}^{2}}P_{2}+...X_{\alpha_{1}^{r}}..X_{\alpha_{s}^{r}}%
P_{r}+P^{\prime},\nonumber\\
Q  & =X_{\alpha_{p}^{1},}...X_{\alpha_{1}^{1}}Q_{1}+X_{\alpha_{q}^{2}%
}..X_{\alpha_{1}^{2}}Q_{2}+...X_{\alpha_{s}^{r}}..X_{\alpha_{1}^{r}}%
Q_{r}+Q^{\prime},\label{Do1}
\end{eqnarray}
where $\left[  P^{\prime},Q^{\prime}\right]  =\left[
P^{\prime},Q_{i}\right] =\left[  P_{i},Q^{\prime}\right]  =\left[
P_{i},Q_{i}\right]  =0$ for $i=1,..,r$ \ and \ $\left[
X_{\alpha_{t}^{l}},P_{k}\right]  =\left[
X_{\alpha_{t}^{l}},Q_{k}\right]  =0$ for $l=1,..,r;\,k=1,..,r$. As
a consequence, the corresponding analytical counterparts
$\pi\left( P\right) ,\pi\left( Q\right)  \in\mathcal{S}\left(
\frak{g}\right) $ would satisfy the relations
\begin{eqnarray}
\fl \left\{  \pi\left(  P^{\prime}\right)  ,\pi\left(
Q^{\prime}\right) \right\}    & =\left\{  \pi\left(  P_{i}\right)
,\pi\left(  Q_{i}\right) \right\}  =\left\{  \pi\left(
P^{\prime}\right)  ,\pi\left(  Q_{i}\right) \right\}  =\left\{
\pi\left(  P_{i}\right)  ,\pi\left(  Q^{\prime}\right)
\right\}  =0,\nonumber\\
\fl \left\{  x_{\alpha_{t}^{l}},\pi\left(  P_{k}\right)  \right\}
& =\left\{ x_{\alpha_{t}^{l}},\pi\left(  Q_{k}\right)  \right\}
=0,\label{Do2}
\end{eqnarray}
for $i=1,..r;\;l=1,..,r;\,k=1,..,r$. On the other hand, both
$\pi\left( P\right)  $ and $\pi\left(  Q\right)  $ are
functionally independent solutions to the subsystem of
differential equations (\ref{sys}) corresponding to the generators
of $\frak{s}^{\prime}$, thus subjected to additional constraints.
In practice, such degenerate independent labelling operators have
never been observed \cite{El,Sh2,Za,Que,Hu1,Is,C72,C75,GaK}, and
the large amount of conditions (\ref{Do1}) that such operators
should satisfy makes it unlikely that they are functionally
independent. In any case, since the constraints (\ref{Do2}) are
not required by the reduction chain, we can always find
independent labelling operators that are non-factorizable.

\subsection{Example: The chain $\frak{sp}(6)\supset\frak{su}(3)\times
\frak{u}(1)$}

The unitary reduction of the non-compact symplectic Lie algebra
$\frak{sp}(6,\mathbb{R})$ has found applications in the nuclear
collective model \cite{RoG}, where nuclear states are classified
as bases of irreducible representations of $\frak{sp}(6)$ reduced
with respect to $\frak{su}(3)\times \frak{u}(1)$. In this case, we
have to add $n=3$ labelling operators to distinguish the states.
Generating functions for this chain were studied in \cite{GaK},
which in particular allowed to derive an integrity basis
consisting of $31$ elementary subgroup scalars. However, the
problem of extracting three commuting operators was not
undertaken. We show that combining the decomposition of Casimir
operators of \cite{C75} with the previous results, such a set of
labelling operators can be obtained naturally.

\smallskip
To this extent, we use the Racah realization for the symplectic
Lie algebra $\frak{sp}\left(6,\mathbb{R}\right)$ \cite{Ra}. We
consider the generators $X_{i,j}$ with $-3\leq i,j\leq 3$
satisfying the condition
\begin{equation}
X_{i,j}+\varepsilon_{i}\varepsilon_{j}X_{-j,-i}=0,
\end{equation}
where $\varepsilon_{i}={\rm sgn}\left( i\right) $. Over this
basis, the brackets are given by
\begin{equation}
\left[  X_{i,j},X_{k,l}\right]  =\delta_{jk}X_{il}-\delta_{il}X_{kj}%
+\varepsilon_{i}\varepsilon_{j}\delta_{j,-l}X_{k,-i}-\varepsilon
_{i}\varepsilon_{j}\delta_{i,-k}X_{-j,l}, \label{Kl3}
\end{equation}
where $-3\leq i,j,k,l\leq 3$. Using the polynomial block matrix
$M$ defined by
\begin{equation}
M=\left(
\begin{array}
[c]{cc}%
x_{i,j} & \sqrt{-1}x_{-i,j}\\
\sqrt{-1}x_{i,-j} & -x_{i,j}%
\end{array}
\right)  \label{M0}%
\end{equation}
and computing the coefficients of the characteristic polynomial
\cite{C46}:
\begin{equation}
\left|  M-T\mathrm{Id}_{6}\right|  =T^{6}+C_{2}T^{4}+C_{4}T^{2}+C_{6}%
,\label{LOL}%
\end{equation}
we obtain three independent invariants $C_{2},C_{4}$ and $C_{6}$
of $\frak{sp}\left(  6,\mathbb{R}\right)  $, and the symmetrized
operators $\Lambda(C_{i})$ give the usual Casimir operators in the
enveloping algebra. As the unitary algebra $\frak{u}(3)$ is
generated by $\left\{  X_{i,j}|1\leq i,j\leq3\right\}  $, it
suffices to replace the diagonal operators $X_{i,i}$ by suitable
linear combinations to obtain a $\frak{su}(3)\times\frak{u}(1)$
basis. Taking $H_{1}=X_{1,1}-X_{2,2},\;H_{2}=X_{2,2}-X_{3,3}$ and
$H_{3}=X_{1,1}+X_{2,2}+X_{3,3}$ we get the Cartan subalgebra of $\frak{su}%
(3)$, while $H_{3}$ commutes with all $X_{i,j}$ with positive
indices $i,j$. The invariants over the
$\frak{su}(3)\times\frak{u}(1)$ basis are obtained replacing the
variables $x_{i,i}$ by the corresponding linear combinations of
$h_{i}$ \cite{C46}. Now we construct labelling operators for this
chain using the contraction method developed in \cite{C72}. The
transformations determined by
\begin{equation}
H_{i}^{\prime}=H_{i},\;X_{i,j}^{\prime}=X_{i,j},\;X_{-i,j}^{\prime}=\varepsilon
X_{-i,j},\;X_{i,-j}^{\prime}=\varepsilon X_{i,-j}
\end{equation}
define a contraction of $\frak{sp}(6)$ onto the inhomogeneous
algebra
$(\frak{su}(3)\times\frak{u}(1))\overrightarrow{\oplus}_{R}12L_{1}$,
where the representation $R$ decomposes into a sextet and
antisextet with $\frak{u}(1)$ weight $\pm1$ and a singlet with
$\frak{u}(1)$ weight $1$. As shown in \cite{C75}, this contraction
induces a decomposition of the Casimir operators of
$\frak{sp}(6)$.\footnote{The quadratic operator is skipped since
it provides no independent labelling operators \cite{C72,C75}.}
The rescaled Casimir operators can be written as:
\begin{equation}
\begin{array}[l]{ll}
C_{4}= \varepsilon^{4} C_{(4,0)}+ \varepsilon^{2} C_{(2,2)}+ C_{(0,4)},& \\
C_{6}= \varepsilon^{6} C_{(6,0)}+ \varepsilon^{4} C_{(4,2})+
\varepsilon^{2} C_{(2,4)}+ C_{(0,6)},&
\end{array}
\end{equation}
where $C_{(k,l)}$ denotes a homogeneous polynomial of degree $k$
in the variables of $R$ and degree $l$ in the variables of the
unitary subalgebra. It can be shown that the $C_{(i,j)}$ are
labelling operators \cite{C75}, and that the $C_{(0,k)}$, being
functions of the Casimir operators of $\frak{su}(3)\times
\frak{u}(1)$, do not provide independent labelling operators. The
symmetrized operators of $C_{(2,2)},C_{(4,2)}$ and $C_{(2,4)}$ are
added to the Casimir operators of $\frak{sp}(6)$ and the
subalgebra $\frak{su}(3)\times \frak{u}(1)$, and the resulting
nine operators can be easily seen to be functionally independent.
It is straightforward but time consuming to verify that these
operators are non-factorizable. To solve the MLP, it remains to
check the commutativity of the labelling operators. In order to
determine whether the symmetrization of $C_{(2,2)}$,$C_{(4,2)}$
and $C_{(2,4)}$ commute, we compute the Berezin bracket. A routine
but tedious computation shows that
\begin{equation}
\begin{array}
[c]{llll}%
\left\{ C_{(2,2)},C_{(4,2)}\right\}=0, & \left\{
C_{(2,2)},C_{(2,4)}\right\} =0, & \left\{
C_{(2,4)},C_{(4,2)}\right\} =0. &
\end{array}
\end{equation}
By Theorem 1, we conclude that
\begin{equation}
\fl
\begin{array}
[c]{llll}%
\left[ \Lambda(C_{(2,2)}),\Lambda(C_{(4,2)})\right]=0, & \left[
\Lambda(C_{(2,2)}),\Lambda(C_{(2,4)})\right]=0, & \left[
\Lambda(C_{(2,4)}),\Lambda(C_{(4,2)})\right]=0. &
\end{array}
\end{equation}
We remark that a direct evaluation of the commutators of these
symmetrized operators is a quite demanding computational problem,
since the polynomials $C_{(2,2)}$, $C_{(2,4)}$ and $C_{(4,2)}$
have $126$, $686$ and $444$ terms, respectively.

\section*{Summary and outlook}

We have shown that the commutativity of labelling operators in the
missing label problem can be solved using the analytical approach,
by means of the so-called Berezin bracket, up to a special type of
polynomials that is unlikely to appear in applications. We stress
that none of the known labelling problems does admit solutions of
such special type, and we firmly believe that do not appear as
labelling operators, at least for reductive Lie algebras. However,
even if such degenerate operators were possible, we can always
find independent labelling operators that are non-factorizable, to
which the analytical criterion is then applied. This result
constitutes a natural enlargement of the classical analytical
approach \cite{Pe}, and provides a general criterion to check the
commutativity of labelling operators without being forced to
determine first an integrity basis. Although the verification that
two operators are non-factorizable is formally very simple, it can
take a large amount of time depending on the number of terms.

\smallskip

The possibility of testing the commutativity of labelling
operators in the analytical frame also opens new possibilities for
the systematic search of the most general solution to a MLP.
Potential applications of this procedure concern reduction chains
for exceptional groups, as well as other high rank groups used in
high energy physics \cite{Wb}, where an approach by means of
enveloping algebras presents many computational problems. An
analysis of these labelling problems systematized along these
lines is currently in progress.

\section*{Acknowledgment}
The authors express their gratitude to J C Moreno and the referee
for useful discussions and valuable suggestions. During the
preparation of this work, the authors were financially supported
by the research projects FPA2006-02315 of the CICYT (LJB),
MTM2006-09152 of the M.E.C. and CCG07-UCM/ESP-2922 of the
U.C.M.-C.A.M. (RCS).

\section*{References}

\end{document}